\newtheorem{thm}{Theorem}
\newtheorem{lem}[thm]{Lemma}
\theoremstyle{definition}
\newtheorem{defn}[thm]{Definition}
\theoremstyle{remark}
\newtheorem{rem}[thm]{Remark}
\newcommand{\Z}{\mathbb{Z}}
\newcommand{\half}{\frac{1}{2}}
\title{The Shape Metric for Clustering Algorithms}
\author{Clark Alexander \\Sofya Akhmametyeva \\
Nousot Applied Research Group\footnote{Additional NARG members, R. Jachi} \\ email \href{mailto:clark@nousot.com}{clark@nousot.com}}
\begin{document}
\maketitle

\begin{abstract}
We construct a method by which we can calculate the precision with which an algorithm identifies the shape of a cluster.  We present our results for several well known clustering algorithms and suggest ways to improve performance for newer algorithms. 
\end{abstract}

\section{Introduction}

One amongst many fundamental techniques in machine learning is data clustering.  One may argue that clusters are inventions, or that no matter which clustering algorithm one uses, one always loses information.  While this may be so, the utility of clustering algorithms should be not underestimated.  In this article the authors attempt to define what is meant by ``the shape of data" and precisely score this.  Many survey works (cf \cite{Cluster-Book}\cite{Bench}\cite{HDBSCAN}) give qualitative measures of strengths and weaknesses of individual algorithms.  Among the most intriguing qualitative feature to the authors is ``shape."  Several of the survey works previously mentioned discuss whether or not an algorithm is able to discern an arbitrary shape.  Algorithms such as HDBSCAN and Wavecluster are able to discern shapes well.  Others such as k-means and its extensions and BIRCH struggle a little more with shape.  Our fundamental question is, ``which algorithm discerns shape the best?"  We attempt to answer this question by putting a pseudo-metric on shapes.  The reason for having a Pseudo-metric, as we shall soon see is that it is possible for two different shapes to have the same shape complexity.  In this case we can partition shapes into equivalence classes of shape complexity and thus our pseudo-metric becomes a true metric on the space of equivalence classes. \\

\par This article proceeds as follows.  In section 2, we discuss how to extract the ``boundary" from a cluster.  In fact we shall extract two boundaries.  The first is the boundary of edges of a triangulated graph, the second is the actual set of points making a 
set of cyclic graphs.  We shall use these notions in concert to extract what we call a ``boundary."  \\
\par In section three, we use the notion of boundary defined in section 2 to fit a curve (or set of curves) to the boundary of our cluster.  This curve will turn out to be a NURBS curve, which is a close relative to the B\`{e}zier curves. By adjusting the weights of the NURBS curve we can affect the local and global curvatures of our set of curves.  We will also discuss how to extend this technique to NURBS surfaces for good visualization in two and three dimensions, and show how the mathematical technique can be used in higher dimensional analysis.  Fortunately the extension of NURBS curves to higher dimensions allows us to separate variables in each dimension.  We discuss the possibility of dimensional reduction by using variables dependent on each other in higher dimensions.\\
\par In section 4, having fit a curve (or higher dimensional analog) to our boundary, we now calculate its squared curvature along the boundary (or all connected components of the boundary).  In higher dimensions we shall calculate the sum of squares of all principal components of curvature.  For example, on surfaces we compute

\begin{equation}
\int_{\bigcup_i \partial C_i} (\kappa^2_1 + \kappa^2_2) d\sigma
\end{equation}

Much of the present work will be restricted to two and three dimensions for the sake of our ability to visualize the results.  Mathematically, this can extended to arbitrarily high dimensions, but the scale must be considered more carefully.  It is the suggestion of the authors that one first normalize the volume of dataspace to have unit volume.  Thus we can mostly avoid divergent integrals for very large, but not very curvy cluster boundaries.

\section{Boundary Extraction}

In order to fit a curve to the boundary of a dataset we must first extract the boundary.  In this case we follow the approach given in \cite{LEC}.  Their method is broken into several pieces on its own.  In a nutshell, the method of Lee and Estivill-Castro is described as follows:
\begin{itemize}
\item[1:] Compute the Delaunay Triangulation on the entire dataset.\\
\item[2:] Compute the mean and variance/Standard deviation of all edges lengths (Euclidean norm).\\
\item[3:] At each vertex $v_i$ sort all edges into short, long, and other.  These classes are determined by short lengths being more than $m$ standard deviations below the mean, long edges $m$ standard deviations above the mean, and other being all other edges.\\
\item[4:] Vertices sharing short edges are immediately in the same cluster.  Vertices sharing long edges are in separate clusters.  For the other edges one checks whether two points are in the same cluster mutually shared neighbors in the triangulation graph.\\
\item[5:] Remove all non-boundary edges by checking whether an edge is in more than one triangle.\\
\item[6:] Orient the boundary edges consistently.  
\end{itemize} 

\begin{rem}
It should be noted that a consistent orientation can be found in two dimensions, but not necessarily in three or higher.  We avoid this problem by computing curvature which is a local property and does not rely on the orientability of the underlying manifold.  Additionally, if we have self-intersecting data from dimension reduction we know that there is some higher dimension in which this can be embedded so that it is not self-intersecting.  Afterall, we know that we can unknot knots in 4 dimensions, and untie general higher dimensional knots in codimension 2.  
\end{rem}

It is also important to note that on its own the Delaunay triangulation will not give nonconvex boundaries.  This is why step 4 in the above algorithm is crucial.

Once we have extracted the clusters via the Delaunay triangulation we can see very clearly that the boundary edges are those which are part of exactly one triangle.  For example, consider
\[
\begin{tikzpicture}
\node[circle,fill](1) at (0,0){};
\node[circle,fill](2) at (0,2){};
\node[circle,fill](3) at (1,1){};
\node[circle,fill](4) at (1,-1){};
\node[circle,fill](5) at (0,-3){};
\node[circle,fill](6) at (-1,-1){};
\node[circle,fill](7) at (-1,2){};
\node[circle,fill](8) at (5,0){};
\node[circle,fill](9) at (5,2){};
\node[circle,fill](10) at (6,1){};
\node[circle,fill](11) at (6,-1){};
\node[circle,fill](12) at (5,-3){};
\node[circle,fill](13) at (4,-1){};
\node[circle,fill](14) at (4,2){};
\node[](15) at (2,0){};
\node[](16) at (4,0){};
\draw[->] (2,0) -- node[above] { Boundary } (4,0);
\foreach \from/\to in {1/2,1/3,1/4,1/6,1/6,2/3,3/4,4/5,5/6,4/6,1/7,2/7,
9/10,10/11,11/12,12/13,8/13,8/14,9/14}
 \draw (\from) -- (\to);
 
\end{tikzpicture}
\]

We can see clearly that the four inner edges which belong to more than one triangle have been removed.  Now consider the case in a higher number of dimensions.  In three dimensions, the triangles are now faces of tetrahedra.  We wish not only to extract the edges and boundary points, but also the triangular faces.  Our criterion, again is that each face must belong to exactly one tetrahedron.  This is its essence is a form of discrete integration.  In two dimensions we can think of orienting all the triangles clockwise (or counterclockwise as one prefers) and signing the edges.  Once we add all edges the edges which are in two triangles cancel by having a $+1$ marker for one triangle and a $-1$ marker for the second.  All the leftover edges are marked identically.  This is the spirit of the higher dimensional boundary extraction, but without the guarantee of orientable manifolds, we cannot consistently mark the simplicial complex to uniformly cancel all interior simplices.  For example, $\mathbb{RP}^n$ is unorientable and we can see this by its top homology group of $\Z/2$ which gives us the explicit obstruction to orientation.  Nonetheless, with spatial data we can still compute boundaries without relying on orientation, and thus we can still compute curvature.

\section{Curve Fitting}

For the moment let us continue in our discussion in two dimensions.  Once we have extracted our boundary, we need to fit a curve to it.  In principal we can simply allow the angular boundary of the Delaunay triangulation to be our shape, but it is impossible to distinguish in a numerical sense how well an algorithm can discern an arbitrary shape.  This question will arise naturally in image processing and, in a nearly identical fashion, audio signal processing.  The boundary extraction procedure from the previous section will tell us that every cluster came from Picasso's cubist period.  Obviously this is not the case.  Facial recognition software could not possibly be accurate if it required every person to have an angular face with no smooth curves.\\

\par For these reasons we shall use the points on the boundary and fit a curve to it.  In two dimensions, as mentioned, we can always consistently orient our boundary, and thus we can fit a curve by tracing near the boundary points \emph{in order}.  In higher dimensions, this cannot work as we may not have a consistent orientation to our boundary.  Additionally, and practically speaking, the more immediate obstruction is that we cannot impose a well-ordering in higher dimensions.  The best we can do is to ask for a partial order of a small number of points at a time.  Our strategy, therefore, will be to fit a curve which relies neither on an ordering, partial or otherwise, of the boundary, nor does it rely on a consistent choice of orientation.\\

\par In two dimensions we have several options available to us.  One of the more well known techniques is that of B\`{e}zier curves.  There are several advantages to B\`{e}zier curves, notably we can work with them explicitly.  Additionally it is easy to patch together small curves and maintain differentiable continuity.  In our case we require at least two continuous derivatives so that we can compute curvature.  Recall that the curvature of a space curve $r(t)$ is given by
\[
\kappa = \left|\frac{d\mathbf{T}}{ds}\right|
\]
That is the derivative of the tangent vector with respect to arclength.  Since the tangent vector is already a derivative, this calculation will not be sensible if $r(t)$ is not at least twice continuously differentiable.  This will require us to patch together many B\`{e}zier curves if we want simpler derivatives, or risk a wildly oscillating curvature when we try to fit the entire boundary with a single curve.  The disadvantage then is that given $n$ control points we wish to fit, we require a degree $n-1$ polynomial for the B\`{e}zier curve.  This may cause us to overfit our data and risk experiencing Runge's effect of rapid oscillation.  One approach will be computationally too expensive, and the other will give an unnaturally large curvature.\\

\par In order to limit both the computational complexity and artifically large curvature we will leave B\`{e}zier curves and instead use B-splines, in particular we shall use nonuniform rational B-splines, hereafter NURBS.  The advantages to NURBS are many and the disadvantages few.  The first major advantage is that the basis splines have compact support.  
\begin{defn}
Consider an interval $[a,b]$ partitioned into $m$ subintervals $[t_i,t_{i+1}]$ by
\[
a= t_0 < t_1 < t_2 < \cdots < t_m = b
\]
then for some positive integer $k$ we define the basis functions $N_{i,k}(t)$ recursively by
\begin{eqnarray}
N_{i,1}(t) &=& \left\{\begin{array}{l l}
1 & t_i\le t < t_{i+1}\\
0 & \textrm{ elsewhere}
\end{array} \right. \\
N_{i,k}(t) & = & \frac{t-t_{i}}{t_{i+k-1}-t_{i}}N_{i,k-1}(t) + \frac{t_{i+k}-t}{t_{i+k}-t_{i+1}}N_{i+1,k-1}(t) \label{nurbs}
\end{eqnarray}

\end{defn}

\begin{lem}
The basis functions $N_{i,k}(t)$ form a partition of unity for all $k$
\end{lem}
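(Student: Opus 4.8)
The plan is to argue by induction on the order $k$, with the Cox--de Boor recursion \eqref{nurbs} doing all the work in the inductive step. For the base case $k=1$, the functions $N_{i,1}$ are by definition the indicator functions of the half-open intervals $[t_i,t_{i+1})$; since these intervals tile $[t_0,t_m)$, we immediately get $\sum_i N_{i,1}(t)=1$ for every $t$ in that range, which is the claim for $k=1$.

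For the inductive step I would assume $\sum_i N_{i,k-1}(t)=1$ and sum the recursion \eqref{nurbs} over $i$, obtaining
\[
\sum_i N_{i,k}(t) \;=\; \sum_i \frac{t-t_i}{t_{i+k-1}-t_i}\,N_{i,k-1}(t) \;+\; \sum_i \frac{t_{i+k}-t}{t_{i+k}-t_{i+1}}\,N_{i+1,k-1}(t).
\]
The key manoeuvre is to re-index the second sum by $i\mapsto i-1$, so that it too is written against the functions $N_{i,k-1}$, now with coefficient $\frac{t_{i+k-1}-t}{t_{i+k-1}-t_i}$. Adding the two coefficients of $N_{i,k-1}(t)$ gives $\frac{(t-t_i)+(t_{i+k-1}-t)}{t_{i+k-1}-t_i}=1$, so the right-hand side telescopes to $\sum_i N_{i,k-1}(t)=1$, completing the induction.

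The hard part will not be the algebra but the bookkeeping at the two ends of the knot vector: when the second sum is shifted, the first term of the original sum and the last term of the shifted sum have no partner, so the cancellation above is only genuinely clean on an interior window---for a generic knot sequence, $[t_{k-1},t_{m-k+1}]$. To upgrade this to a partition of unity on all of $[a,b]$ I would fix a convention in advance, either clamping the knot vector by repeating $a$ and $b$ with multiplicity $k$, or extending the knot sequence beyond $[a,b]$ and summing over the correspondingly enlarged index set; in either case the unmatched boundary terms vanish or fall outside $[a,b]$. A secondary nuisance is that coincident knots can make a denominator in \eqref{nurbs} vanish; I would adopt the standard convention that any term with zero denominator is read as $0$, and check that the telescoping identity survives it because the corresponding $N_{i,k-1}$ is then identically zero on the interval in question.
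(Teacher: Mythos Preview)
Your proposal is correct and follows exactly the approach indicated in the paper: the paper's own proof consists of the single sentence ``This proof is a straight forward induction on the index $k$,'' and you have supplied precisely that induction, including the re-indexing that makes the coefficients of $N_{i,k-1}(t)$ sum to $1$. Your additional care about endpoint bookkeeping and the zero-denominator convention goes beyond what the paper records but is consistent with it.
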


\begin{proof}
This proof is a straight forward induction on the index $k$.
\end{proof}

It is not immediately obvious from the definition, but 
$N_{i,k}(t)$ is $C^{k-2}$ continuous.  Examining this a little, we see each incremental $k$ gives an extra power of $t$ as well as a convolution of lower order basis functions.  $N_{i,1}(t)$ is not globally continuous, $N_{i,2}(t)$ is simply continuous, but differentiable.\\

\par Given these basis functions for our spline we define a general NURBS curve for (spatial) points $\mathbf{p}_j$ and weights $w_j$ 

\begin{equation}
\vec{r}(t) = \frac{\sum_{j=0}^{m} \mathbf{p}_j w_j N_{j,k}(t)}{\sum_{j=0}^{m}  w_j N_{j,k}(t)}
\end{equation}

\begin{rem}
When the weights $w_j$ are uniform we reduce to a B-spline. Additionally, if the order $k$ of the basis functions is the same as the number of control points then this reduces to a B\'{e}zier curve.
\end{rem}

For our purposes, we require only $C^2$ continuity and thus from this point forward we shall work with the basis functions $N_{i,4}(t)$.\\

\par Extending this to surfaces sitting in three dimensions we parametrize both dimensions and get the parametric representation of the NURBS surface
\begin{equation}
r(u,v) = \frac{\sum_{i,j} \mathbf{p}_{i,j} w_{i,j} N_{i,k_1}(u)N_{j,k_2}(v)}{\sum_{i,j}  w_{i,j} N_{i,k_1}(u)N_{j,k_2}(v)}
\end{equation}

In many dimensions we have the parametrized manifold
\begin{equation}
r(u_1,\dots, u_n ) = \frac{\sum \mathbf{p}_{i_1,\dots,i_n} w_{i_1,\dots,i_n} N_{i_1,k_1}(u_1)\dots N_{i_n,k_n}(u_n)}{\sum  w_{i_1,\dots,i_n} N_{i_1,k_1}(u_1)\dots N_{i_n,k_n}(u_n)}
\end{equation}

We can see some of the advantages to NURBS manifolds over other splines.  First we can control the computational complexity by restricting the order of the basis functions.  That is that we can fit a bicubic spline to a surface in three dimensions, without the necessity of building $O(N)$ patches $N$ begin the number of points in a cluster.  Second, since our basis functions are compactly supported around our control points, we need not worry about orienting our cluster boundary nor must we concern ourselves with putting our control points in a ``proper" order.  Third, by adjusting our weights we can fit our curves and manifolds as precisely as we like.  Basically, increasing $w$ we push our curve closer to the actual points.  The advantage here, is that we have the ability to control noise in a measurable way.  Fourth, these manifolds are already parametrized which makes computations  in local coordinates feasible.  For surfaces, we can directly compute the first and second fundamental forms which give us our principal curvature explicitly.

\section{Calculating Curvature}

We have already mentioned that we don't wish to over fit our curves and surfaces, nor do we wish to under fit them.  Our goal then will be to assign to each cluster boundary a number which gives a good sense of its shape complexity.  For the moment we shall consider surfaces sitting in three dimensions.  In three dimensions there are two related, but distinct concepts called ``curvature."  The first is Gaussian curvature $K$ and the second is mean curvature $H$.  Each of these has it's own use and advantage, but we shall combine them to get a more effective curvature calculation.  On a space curve there is only only curvature, but in two dimension there are two ``principal" curvatures $\kappa_1$ and $\kappa_2$.  These indicate how much a surface is curving relative to its parametrization in the first or second coordinate.  The formulaic versions show:
\[
K = \kappa_1 \kappa_2 \textrm{ and } H = \frac{\kappa_1 + \kappa_2}{2}
\]

The problems we see computationally with each of these are the following: If one of the curvatures is zero, then $K$ is zero.  Consider then a very erratic curve in the plane cross the unit interval.
\[
C(t) \times [0,1]
\]

This will have $K=0$ uniformly.  In the authors' opinions, this is not a simple curve.  On the other hand, the mean curvature $H$ will deliver a uniformly zero score for a manifold which is locally hyperbolic everywhere.  A saddle for example is not a shape with zero complexity.  We therefore wish to use the average of the sum of squared principal curvatures:
\[
\frac{\kappa_1^2 + \kappa_2^2}{2} = 2 H^2 - K
\]

Additionally we know that shapes do not always have a constant curvature everywhere, so we wish to add up all curvature contributions.  Letting $\partial C$ be the boundary of our cluster and $d\sigma$ the surface area element we compute
\begin{equation}
Shape := \half \int_{\partial C} (\kappa_1^2 + \kappa_2^2) d\sigma
\end{equation}

More generally speaking, in $d$-dimensions we have $\kappa_1,\dots,\kappa_d$.  These sum of curvatures is a trace in two dimensions, and the product is a determinant.  In fact, in two dimensions given the first $I$ and second $II$ fundamental forms we can calculate principal curvatures by

\begin{equation}
\det(II - \kappa I) = 0 \label{principal}
\end{equation}

In $d$-dimensions this leads us to the idea that there are higher invariant forms of eigenvalues which we may compute by Newton polynomials.  Our higher dimensional shape score shall be defined by
\begin{equation}
Shape : = \frac{1}{d} \int_{\partial C} \left(\sum_{j=1}^{d} \kappa_j^2\right) dV
\end{equation}

\subsection{The Correct Curvature}

It is our duty therefore to pick a proper curvature which fits our surface well enough.  In this case we have many options.  We may set all weights to be identical, which gives a $B$-spline.  In pratice this is the most effective way to actually compute curvature.  However, for those who are more technically inclined we may consider two other options.  Maximizing curvature is not appropriate since there is no ``maximum" curvature.  We see this in the Runge effect.  If we fit a curve of degree 5000 to a cluster with 4999 points, we may approach infinite curvature.  Thus it will be in our best interest to minimze curvature relative to the boundary.  We have two major ways to do this.  The first is to give the absolute smallest possible curvature integral.  This seems to be a good measure, since spirals will have a constantly increasing curvature, while polygons will have a small curvature unless they are particularly small in which the curve will have to change quickly in a small space to fit the data well.  The second is to minimize the distance 
from the data to the curves.  This can be done via a least squares from points to lines.

\subsection{The Fundamental Forms}

For a parametrized surface $S(u,v)$ we define the following forms

\begin{eqnarray}
I &=& \begin{bmatrix}
E & F \\
F & G
\end{bmatrix}\\
E & = & S_u \cdot S_u \nonumber\\
F & = & S_u \cdot S_v \nonumber\\
G & = & S_v \cdot S_v \nonumber
\end{eqnarray}

\begin{eqnarray}
II & = & \begin{bmatrix}
L & M \\
M & N
\end{bmatrix}\\
L & = & S_{uu}\cdot \hat{n} \nonumber \\
M & = & S_{uv}\cdot \hat{n} \nonumber \\
N & = & S_{uu}\cdot \hat{n} \nonumber \\
\hat{n} & = & \frac{S_u \times S_v}{\| S_u\times S_v\|} \nonumber 
\end{eqnarray}

Given these forms called the first ($I$) and second ($II$) fundamental forms, respectively, we can calculate the Gaussian and mean curvatures as follows.

\begin{equation}
K = \frac{\det(II)}{\det(I)}
\end{equation}

\begin{equation}
H = tr((II)(I^{-1}))
\end{equation}

In coordinates we can reduce both the Gaussian and mean curvatures via

\begin{equation}
K = \frac{S_{uu}\cdot S_{vv} - S_{uv}^2}{\left(1 + S_u^2 + S_v^2\right)^2}
\end{equation}

\begin{equation}
H = \frac{(1+S_u^2)S_{vv} - 2S_u S_v S_{uv} + (1+ S_v^2)S_{vv}}{\left(1 + S_u^2 +S_v^2\right)^{3/2}}
\end{equation}

\subsection{B-splines and Curvature}

In practice we will use identically weighted NURBS curves, so that they are uniform and we need not worry about rational functions.  Calculations can, however, be made explicit for different weights.  \\

\par Recall that we only require two continuous derivatives and since we are using $w_{ij}=1$ we need to consider only the basis functions $N_{i4}(u)$ and $N_{j4}(v)$.  This reveals our parametrized surface to be

\begin{equation}
S(u,v) = \sum_{i,j} \mathbf{P}_{ij}N_{i4}(u)N_{j4}(v)
\end{equation}

Leading us to

\begin{eqnarray}
S_u & = & \sum_{i,j} \mathbf{P}_{ij}N'_{i4}(u)N_{j4}(v) \\
S_v & = & \sum_{i,j} \mathbf{P}_{ij}N_{i4}(u)N'_{j4}(v) \nonumber \\
S_{uu} & = & \sum_{i,j} \mathbf{P}_{ij}N''_{i4}(u)N_{j4}(v) \nonumber \\
S_{uv} & = & \sum_{i,j} \mathbf{P}_{ij}N'_{i4}(u)N'_{j4}(v) \nonumber \\
S_{vv} & = & \sum_{i,j} \mathbf{P}_{ij}N_{i4}(u)N''_{j4}(v) \nonumber 
\end{eqnarray}

At this point we see one of the big advantages of having a parametrized surface of separated variables.\\

Now let us consider the derivatives of our basis functions \ref{nurbs}.

\[
N_{i,k}(t)  = \frac{t-t_{i}}{t_{i+k-1}-t_{i}}N_{i,k-1}(t) + \frac{t_{i+k}-t}{t_{i+k}-t_{i+1}}N_{i+1,k-1}(t)
\]

This gives us a first derivative as

\begin{eqnarray}
N'_{ik}(t) & = & \frac{1}{t_{i+k-1}-t_i}N_{i,k-1}(t) + \frac{t - t_i}{t_{i+k-1}-t_i}N'_{i,k-1}(t) \nonumber \\
& + &  \frac{-1}{t_{i+k}-t_{i+1}}N_{i+1,k-1}(t) + \frac{t_{i+k}-t}{t_{i+k}-t_{i+1}} N'_{i+1,k-1}(t) 
\end{eqnarray}

\end{document}